\newtheorem{Mytheorem}{Theorem}
\newtheorem{Mylemma}{Lemma}
\newtheorem{algorithm}{Algorithm}
\DeclarePairedDelimiter{\ceil}{\lceil}{\rceil}
\DeclarePairedDelimiter{\floor}{\lfloor}{\rfloor}
\newcommand{\hash}{\textrm{hash}}
\begin{document}

		\title{Scaling Blockchains \\ Without Giving up Decentralization and Security \\ {\large A Solution to the Blockchain Scalability Trilemma}}

\author{
	\IEEEauthorblockN{Gianmaria Del Monte}
	\IEEEauthorblockA{\textit{Engineering Dept.} \\
		\textit{Roma Tre University}\\
		Roma, Italy \\
		gia.delmonte@stud.uniroma3.it}
	\and
	\IEEEauthorblockN{Diego Pennino}
	\IEEEauthorblockA{\textit{Engineering Dept.} \\
		\textit{Roma Tre University}\\
		Roma, Italy \\
		pennino@ing.uniroma3.it\\
		0000-0001-5339-4531}
	\and
	\IEEEauthorblockN{Maurizio Pizzonia}
	\IEEEauthorblockA{\textit{Engineering Dept.} \\
		\textit{Roma Tre University}\\
		Roma, Italy \\
		pizzonia@ing.uniroma3.it\\
		0000-0001-8758-3437}
}

	\maketitle

\begin{abstract}
	
	Public blockchains should be able to scale with respect to the number of nodes
	and to the transactions workload. The \emph{blockchain scalability trilemma}
	has been informally conjectured. This is related to scalability, security and
	decentralization, stating that any improvement in one of these aspects should
	negatively impact on at least one of the other twos.
	In fact, despite the large research and experimental
	effort, all known approaches turn out to be tradeoffs.
	
	We theoretically describe a new blockchain architecture that scales to arbitrarily high
	workload provided that a corresponding proportional increment of nodes is
	provisioned. We show that, under reasonable assumptions, our approach does not 
	require  tradeoffs on security or decentralization.
	To the best of our knowledge, this is the first
	result that disprove the trilemma considering the scalability of all 
	architectural elements of a blockchain and not only the consensus protocol. 
	While our result is currently only theoretic, we believe
	that our approach may stimulate significant practical contributions.
	
\end{abstract}

\begin{IEEEkeywords}
	Blockchain, Distributed Ledger Technology, Sclability, Security, Decentralization, Trilemma.
\end{IEEEkeywords}

\section{Introduction}

Scalability of public blockchains is extremely important for their
success. All communities and companies engaged in the development of public
blockchains strive to create solutions to support a large number of nodes and/or
large workloads (in terms of transactions per second). While distributed systems
or peer-to-peer systems can scale very well with respect to these parameters,
blockchains, till now, have represented a very hard challenge. The difficulty
comes from the interplay of contrasting requirements in the blockchain design.
Vitalik Buterin, founder of the Ethereum project, summarized his understanding
about this problem by introducing the so-called \emph{blockchain scalability
trilemma}. This trilemma states that regarding scalability, security and
decentralization, any improvement in one of these aspects negatively impacts on
at least one of the other twos. 

Intuitively, consider an ideal blockchain, in which nodes are all equal
(with same and constant cpu, bandwidth, storage, etc.). If we double both load and 
number of nodes, the ratio between load and available
processing resources is constant, and scalability might, in principle, be possible.  
The obvious question is if it is
possible to design a blockchain in which all activities (consensus, storage, and
communication) are distributed (and not just replicated) across nodes, keeping 
both high security and high decentralization, independently from the load and the number of nodes.

Current blockchains broadcast all pending transactions and accepted blocks to all
nodes. Hence, each node is a scalability bottleneck, since its
processing power and its bandwidth are bounded. While a large literature addresses the 
 scalability of the consensus algorithm, this is only a part of the story.
The \emph{sharding} approach, propose to 
partition the whole network (comprising nodes, blocks, consensus, and
transaction history) into several smaller networks (\emph{shards}), with
some form of coupling among them (to handle inter-shard transactions). 
However, shards have less nodes than the whole network, reducing decentralization and security,
and inter-shard transactions may pose a scalability challenge.

In this paper, we theoretically describe a novel blockchain architecture for
which the maximum workload that can be processed is proportional to the number
of nodes involved in the blockchain. Our scalability result is related to all
architectural aspects, i.e., no node is ever forced to receive, process or store
a quantity of data that is proportional to the whole workload. In our approach,
 decentralization
and security, are not affected when the number of nodes is increased. This derives from 
the adoption, as a building block, of a randomized committee-based
consensus, like the one described in~\cite{chen2019algorand}.

The value of our contribution is twofold. Firstly, it shows that, under
realistic assumptions, it is possible to solve the blockchain scalability
trilemma. Secondly, it provides a construction that can be of inspiration for
practical realizations.

The rest of the paper is structured as follows. In Section~\ref{sec:soa}, we quickly review some related
literature. In Section~\ref{sec:background}, we provide
basic definitions, assumptions and some background. 
\begin{longvers}
In Section~\ref{sec:current_problems}, we focus on scalability problems of current solutions.
\end{longvers}
In Section~\ref{sec:solution}, we first present the main ideas of our architecture and then 
we detail the tasks performed by each committee. 
\begin{longvers}
In Section~\ref{sec:correctness}, we formally prove correctness.
\end{longvers}
In Section~\ref{sec:scalability}, we show a formal scalability result.
In Section~\ref{sec:discussion}, we discuss the effectiveness of our approach and some other aspects.
In Section~\ref{sec:conclusions}, we draw the conclusions.

\section{State of the Art}\label{sec:soa}

Despite its practical relevance and its potential theoretical impact, the
blockchain scalability trilemma~\cite{ethfaqsharding} has not a formal
definition in scientific literature. However, ``the trilemma'' is often cited in
research works. For example, a taxonomy of consensus algorithm based on the scalability
trilemma is provided in~\cite{altarawneh2020buterin} and surveys on blockchain
scalability explicitly cite it~\cite{xie2019survey,zhou2020solutions}.

\begin{longvers}
Concerning scalability, the most interesting results are about 
sharding. However, all of them suffer, in different ways, from the strong
partitioning of the whole blockchain. 
RapidChain~\cite{zamani2018rapidchain} requires strong synchronous communication
among shards which is hard to achieve.
AHL~\cite{dang2019towards} is an approach that turns out to be expensive when
a transaction involves multiple shards.
Omniledger~\cite{kokoris2018omniledger} requires every participant in the
consensus protocol (a subset of the nodes in each shard) to broadcast a message 
to the entire network to verify transactions. It also requires the users 
to participate actively in cross-shard transactions verification.
Elastico~\cite{luu2015scp} and
Monoxide~\cite{wang2019monoxide}, require to execute expensive proof-of-work to
decide which shard should process a transaction.
\end{longvers}
\begin{sixpages}
Concerning blockchain scalability, the most interesting results are about
sharding~\cite{zamani2018rapidchain,dang2019towards,kokoris2018omniledger,wang2019monoxide}.
\end{sixpages}
Other proposals are targeted to scalability of consensus algorithms, like the
EOS network, which is analyzed in~\cite{xu2018eos},  
Algorand~\cite{chen2019algorand}, and Ouroboros~\cite{david2018ouroboros,kiayias2017ouroboros}.
\begin{sixpages}
An in-depth review of the problems of current blockchain solutions is provided in~\cite{scaling-arXiv}.
\end{sixpages}

A fundamental role in blockchain scalability is played by \emph{Authenticated
	Data Structures} (ADS) and related techniques. 
The works~\cite{chen2019algorand}
and~\cite{bpp-bmdhtdvfss-19} propose to use authenticated data structures to
scale with respect to the size of the blockchain state. Other works related to
scalability of ADSes
are~\cite{ppg-pisuadsuc-19,ppp-oiesarqads-ieee-19,gps-usbcheckin-2016,edgp-piaidpics-2017,gp-spuuutd-2016}.

\section{Basic Definitions and Assumptions}\label{sec:background}

In the following, we introduce some definitions and assumptions that are used in the rest of the paper. 

We call \emph{candidate} (or \emph{pending}) transactions those that are generated by users but are
not (yet) processed by the blockchain. A \emph{confirmed} (or \emph{accepted} or \emph{committed}) transaction is a
transaction that was successfully processed by the blockchain. The
\emph{history} of the blockchain is a totally ordered sequence of confirmed
transactions. For simplicity, in the rest of the paper, we mostly focus on a
simple model of blockchain that realizes a pure cryptocurrency. In other words,
each \emph{address} (or \emph{account}) is associated to a wallet with a non
negative balance and a transaction just moves currency from a wallet to another
changing balances, accordingly. While the results of this paper may be
applicable in a more powerful model, for simplicity, we do not consider more
complex cases. The \emph{state} of the blockchain is the balances of all
addresses, at a certain instant. Since confirmed transactions are totally ordered, the
state of the blockchain is defined between two consecutive accepted transactions. The
sequence of confirmed transactions is split into \emph{blocks} that are
sequentially numbered. Pending transactions are confirmed when a new block is
\emph{created} (or \emph{mined}). The mining of a new block requires to (1)
select a subset of pending transactions, (2) order them, and (3) verify that
certain \emph{consensus rules} are fulfilled when a transaction is applied to
the previous state. In practice, consensus rules may be complex, but, in our
model, they are limited to keeping non-negative balances.

The \emph{nodes} of a blockchain network act as peers in the sense that they
perform the same actions. Commonly, they broadcast candidate transactions (even though, in the approach described in this paper this is not true). Each node keeps
a set of candidate transactions it knows. Nodes jointly perform a \emph{consensus algorithm} (or
\emph{consensus protocol}) to reach \emph{consensus} on the transactions to be
included in the next block. This also implies a consensus on the state after that
block (i.e., after the last transaction of the block). We assume 
that consensus does not introduce forks as in~\cite{chen2019algorand}.
In our approach, the transactions to be included in the next block are 
decided by a joint work of a number of \emph{committees}, each of them 
performing a consensus algorithm. This paper describes in detail 
how this can be done. 

We consider all nodes to have always the same constant amount of resources in
terms of CPU, storage, and network bandwidth. For simplicity, we also assume
that honest nodes are reliable, that all communications between nodes are
instantaneous, and that the backbone of the underlying network does not
introduce any bottleneck and is reliable.

The stream of candidate transactions is the \emph{workload} (or simply
\emph{load}) of the blockchain and its magnitude is a frequency measured in
transactions per seconds. In the following, we assume that accounts whose
balance is changed by candidate transactions are uniformly distributed
on the address space.
The time a candidate transaction takes to be confirmed
is called \emph{(confirmation) latency}. The \emph{maximum throughput} of the
blockchain is the frequency of candidate transactions that it is possible to
confirm with bounded latency, i.e., avoiding indefinitely growth of the set of
pending transactions. When workload is less than the maximum throughput, we say
that the blockchain is \emph{well-provisioned} (for that workload).
Since we are interested in investigating scalability of blockchains, we compare situations in which the same blockchain architecture is adopted with different number of
nodes and different workload. 
We say that a blockchain architecture \emph{scales} when starting
from a well-provisioned blockchain, it is possible to give a new well-provisioned 
blockchain with proportionally increased load and nodes. For simplicity,
we do not deal with dynamically changing number of nodes.

\begin{sixpages}
We also assume general knowledge of \emph{Merkle trees} and related concepts, like \emph{proofs}, \emph{root-hash} and \emph{pruning} (see~\cite{scaling-arXiv} and~\cite{bpp-bmdhtdvfss-19}).
\end{sixpages}
\begin{longvers}
In the rest of the paper, we also assume knowledge of \emph{Merkle trees}. A
Merkle tree $T$ is a complete binary tree in which each leaf contains a balance of an address and each internal node contains the \emph{cryptographic hash}
of the content of its children. The root of $T$ contains the \emph{root-hash} of $T$. From $T$,
it is possible to provide a (logarithmic length) proof of the value of each of
the leaf $v$ of $T$, by providing the content of the siblings of the nodes on the path from $v$ to the root. The proof can be verified  only against the root-hash of $T$ from which it is
derived and it is hard for an attacker to synthesize a valid proof.
If proofs are needed for only a subset $S$ of the leaves of $T$, it is possible to prune
$T$ so that size of $T$ is proportional to $|S|$ and it is still possible to obtain proofs for each element of $S$. Further details about
the use of pruned Merkle trees in blockchains can be found in~\cite{bpp-bmdhtdvfss-19}.

\end{longvers}

\begin{longvers}

\section{Problems of Current Approaches}\label{sec:current_problems}

In this section, we list some relevant aspects that make current common public
blockchains architecture not scalable (according to the scalability definition
given in Section~\ref{sec:background}).

In this paper we focus on the following problems.
\begin{enumerate}
	
	\item New candidate transactions are always broadcasted to all
	(validating/mining) nodes. 
	
	\item There exists a set of (validating/mining) nodes (possibly comprising all nodes), each processing all candidate transactions that have to be included in the next block.
	
	\item Each new block is broadcasted to all nodes. 
	
\end{enumerate}

Each of these aspects implies that, to well-provision the blockchain, individual
nodes have to increase computing power and bandwidth even under proportionality
condition.

We purposely avoid to mention scalability problems related to the computational complexity of the consensus
protocol, since these three aspects are independent from it and are relevant
even for blockchains that adopt light consensus protocols. We also avoid to
mention the problem of storing the whole blockchain state in each node, which is
already addressed in other works~\cite{bpp-bmdhtdvfss-19,leung2019vault}.

In literature, most of the proposals that address the above scalability problems
introduce some form of \emph{sharding}, which is a way to partition the
blockchain network and the transactions, in effect, creating a sort of federation of a multiplicity of blockchain networks. The sharding technique
suffers of a number of problems. The most hard-to-solve ones derives from the
fact that state is partitioned across shards. Hence, if shards are many, most
transactions turn out to change the state of more than one shard. These are called 
\emph{inter-shard} transactions. Clearly, each transaction should be atomic, which is
not so simple to achieve in a sharded environment. Typically this introduces
inefficiencies related to inter-shard communication (usually performed using
some form of broadcast) and the need of techniques similar to a two-phase commit
to ensure that all transactions executions are atomic. 
Another strong criticism is about security, since smaller shards are supposed to allow for better scalability but
are deemed to be less secure than larger ones. 

The contribution of this paper is the description of an architecture that aims
at addressing the three above-mentioned scalability problems. We do this without
relying on sharding, in the sense, in our case, the blockchain is one.
However, we introduce a way of dynamically sharing the load among
nodes. Our solution intends to apply a parallel version of the Algorand consensus
approach~\cite{chen2019algorand}. We also leverage the idea of distributing the storage of the state, as
described in~\cite{bpp-bmdhtdvfss-19}.

\end{longvers}

\section{A Scalable Blockchain Architecture}\label{sec:solution}

In this section, we describe an architecture that achieves scalability, as
defined in Section~\ref{sec:background}. We first informally describe 
ideas that make scalability possible in our architecture, then we list in detail 
all the components of the architecture and their behavior.

\subsection{Main Ideas}\label{sec:ideas}

\paragraph{Committees} In our approach, there are a number of \emph{committees}
that collectively work to perform the computation needed to validate and confirm
transactions and to compute the new block. Each committee is made of a number of
nodes called \emph{members}. The way members of each committee are selected is
not important to understand our architecture.  However, for security reasons, a
randomized approach that regularly change committee members can be adopted (like
for example in~\cite{chen2019algorand}). We assume that all committees are equal
sized. Their size is fixed and does not change when the number of nodes in the
network changes. The committees cooperates by exchanging messages. A discussion 
about inter-committee communication and of periodically changing the  committee members is provided in Section~\ref{sec:discussion}.

\paragraph{Blocks} Differently from the common approach, in our architecture,
a block conceptually aggregates transactions, but transactions are never explicitly represented in the broadcasted block. In fact, the transactions 
related to a block are proportional to the workload. Forcing a node to receive
all of them would impair scalability. Instead, for each new block, we only broadcast
constant size data. We call \emph{block} this constant size data. Our block can be considered 
equivalent in content to the block header of other traditional approaches. For our theoretical analysis, it is
only relevant to know that the block contains the hash of the previous block, and the hash of the blockchain state
after the application of all transactions of the block. 
The state hash is computed on the basis of a
Merkle tree, hence we call it \emph{state root-hash}. 
In the following, we explain how the computation of the state root-hash
is shared across several committees. In principle, blocks might also contain a hash of the
transactions of the block. However, this is not strictly needed for 
the correctness of the execution, which computes the new state on the basis of the 
previous one. Hence, we ignore it. Consider also that techniques 
similar to those that we propose for scaling with respect to the computation the state root-hash could be adopted for a
root-hash that summarizes the transactions of the block.

\paragraph{Storage} In our solution, we are interested in storing the state of the blockchain (a similar approach can be adopted for transaction history, but we do not include it in our model). For scalability reasons, it is not possible for all nodes to store 
the whole state. This is not only because of the size of the needed storage, but also because processing updates to the whole state would require
an amount of resources proportional to the workload. Bernardini et
al.~\cite{bpp-bmdhtdvfss-19} and Vault~\cite{leung2019vault} propose approaches
that do not require for all nodes to store the whole blockchain state. In these approaches, a node
may not even store any state at all and still be able to participate in transactions confirmation 
and block creation activities. In the
following, we refer to a node that stores a part of the blockchain state as a \emph{storage
	node}. The cited works, consider a (complete and binary) Merkle
tree on the whole address space in which each leaf is an address (comprising unused
addresses). We denote this Merkle tree by $W$.
Storage nodes store only a part of the state (a subset of all accounts) and the corresponding part of 
$W$, that covers all the paths from stored addresses to the root, pruning the rest of $W$ (see details in~\cite{bpp-bmdhtdvfss-19}). 
The root-hash of $W$ is the state root-hash.  Since blockchain state changes, the block contains the state root-hash of the state after the application of the last transaction of the block.

\paragraph{Transaction creation} As in~\cite{bpp-bmdhtdvfss-19}
and~\cite{leung2019vault}, in our approach, a node $n$ that intends to create a transaction has
the responsibility to provide cryptographic proofs of the balances of the
accounts that are going to be changed by the transaction (i.e., of the accounts \emph{involved} in the transaction). These cryptographic proofs are asked by $n$ to one or more storage nodes. Since
each storage node stores a pruned Merkle tree, they are able to provide that proof for the accounts they store.
However, as will be clear in the following, the balances provided by storage nodes
are related to a state that is delayed by a few blocks. The proof $p$ obtained from a storage node is \emph{related} to 
a state of the blockchain after a certain block $B$, intending
that its is valid with respect to the state root-hash in $B$. 
 We also simply write that $p$ is related to $B$.
Since nodes keep only a truncated list of blocks (see below), proofs that are too old
cannot be validated and we say that they are \emph{expired}.

\begin{figure*}
\begin{subfigure}{0.8\textwidth}
	\centering
	\includegraphics[width=\linewidth]{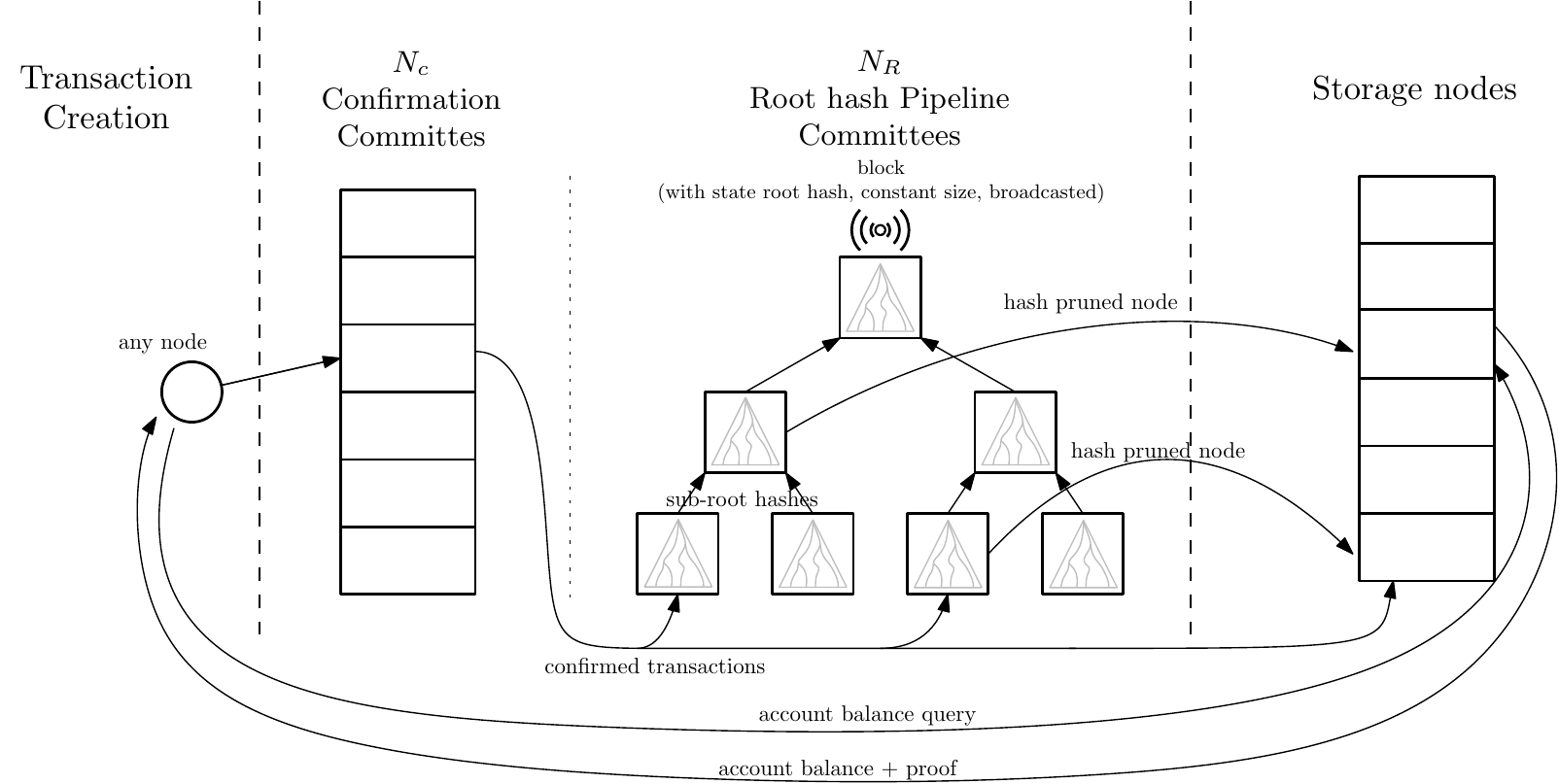}
	\subcaption{} 
	\label{fig:architecture}
\end{subfigure}
\hfill
\begin{subfigure}{0.14\textwidth}
	\centering
	\includegraphics[height=\linewidth,angle=90]{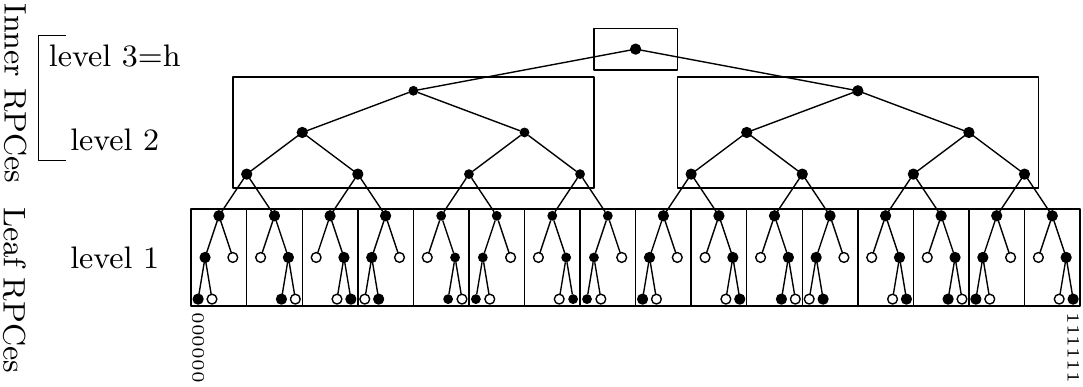}
    \subcaption{}
	\label{fig:WpartitionedByRPCs}
\end{subfigure}
\caption{In~\ref{fig:architecture}, the flow of data within the proposed architecture. In~\ref{fig:WpartitionedByRPCs}, The tree conceptual Merkle tree $W$ partitioned into underlying trees of  RPCes. White nodes are pruned.}
\end{figure*}

\begin{longvers}
\begin{figure*}
	\centering
	\includegraphics[width=\linewidth]{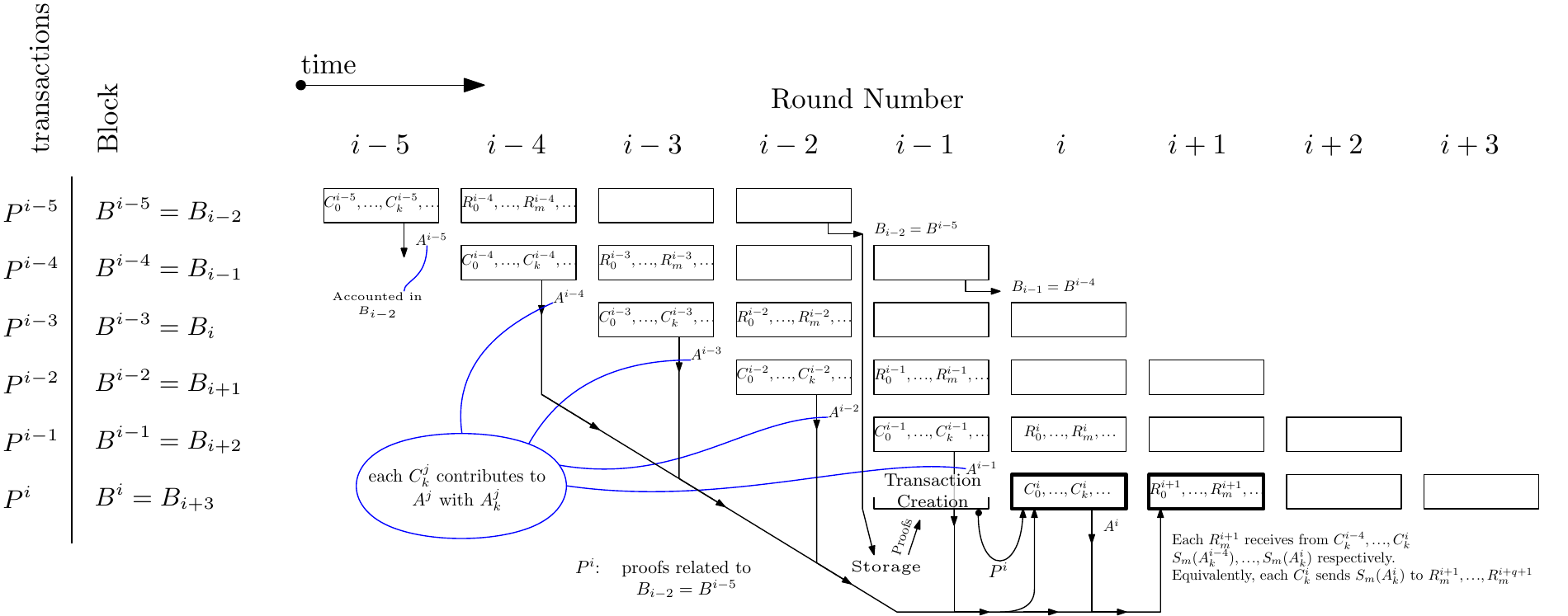}
	\caption{An example of execution of a pipeline with four stages. In the picture, inputs to $C_k^i$ and to $R_m^{i+1}$ are evidenced.} 
	\label{fig:pipeline}
\end{figure*}
\end{longvers}

\paragraph{Pipelining} Consider the computation needed to validate and confirm
transactions and then to compute the new state root-hash to be included in a new
block. The effort needed for this task is clearly proportional to the workload.
To scale, it is important to distribute  this computation 
over several committees. For this reason, we introduce a pipeline in which the computation 
is performed in several \emph{stages}. In each stage, the computation can be further distributed across several committees, 
using a parallel processing approach. The output of the last stage is a new block.

We suppose that the time is spliced into equal length \emph{rounds}.
Rounds are sequentially numbered. In each round, each committee performs its
task for a certain pipeline stage. The result of the computation of a committee is communicated to the committees that, in the next round, need it for the next
pipeline stage.  For simplicity, we assume that
all communications are instantaneous. A discussion on communication problems is provided
in Section~\ref{sec:discussion}.

We denote by $B_i$ the block produced as output of the last stage in round $i$. We
denote the block that contains the transactions that entered the pipeline at round $i$ by
$B^i$. If the pipeline has $q$ stages, the transactions that enter the pipeline
at round $i$, and that are accepted, will be part of the block produced as
output by the last stage that runs at round $i+q-1$. Hence, we have that $B^i=B_{i+q-1}$. 
The first round in which $B^i$ can be used by any node is round $i+q$.
\begin{longvers}
In Figure~\ref{fig:pipeline}, an example of pipeline with $q=4$ is depicted. Details are explaind in the following.
\end{longvers}

Each produced block is propagated to all nodes. Each storage node replies only
with proofs related to already produced blocks. We assume that a node that
creates a transaction $x$ takes one round to retrieve all the proofs for
balances of addresses involved in $x$. These proofs are validated in the first
stage of the pipeline. Note that, at that time, these proofs are old by at least
$q$ rounds.

\paragraph{Truncated block history} As in~\cite{bpp-bmdhtdvfss-19}, we assume
each node does not keep all the blocks, but only the last $d$ blocks received.
That is, at round $i$, each node stores blocks
$B_{i-1}=B^{i-q},\dots,B_{i-d}=B^{i-q-d+1}$ and previous blocks are
\emph{forgotten}. A proof $p$ related to $B_j$ is \emph{expired} at
round $i$ if $ j < i-d $ (i.e., nodes have forgotten $B_j$ needed to validate
$p$). Since in round $ i $ the last available block is $B_{i-1}$, a storage node replies
with proofs related to that block. In our model, we assume a node takes one round to create a transaction (asking for proofs to the storage). Hence, for the nodes participating
in committees of the first pipeline stage to be able to verify balance proofs, it
should be $d \geq 2$. In the following we assume $ d = 2 $. In a practical realization, $ d $ might be larger to compensate delays of the network~\cite{bpp-bmdhtdvfss-19}.

\subsection{The architecture}\label{sec:architecture}

In Figure~\ref{fig:architecture}, we show the
proposed architecture and the flow of information within it. We describe it from
left to right. 

Any node can create a candidate transaction. As described above, a new candidate
transaction should come with balances of the involved accounts and with corresponding 
proofs of integrity, related to a previous round. This can be obtained from a storage node. Candidate transactions are
not broadcasted into the network (nothing is broadcasted in our approach but 
constant size blocks), instead, they are sent to a limited number of nodes as described
below.

The validation of the set of transactions that have to be included in a block
is performed by \emph{Confirmation Committees} (\emph{CC}). We denote each
distinct CC by $C_k$ with $k=1,\dots,N_c$, where $N_c$ is
the number of CCes. When relevant, we write $C_k^i$ intending to denote the
$k$-th confirmation committee that runs in the $i$-th round. The node that
creates a new transaction $x$ sends it to $C_k^i$, with $k =
(\hash(x_{\textrm{src}}) \mod N_c)$, where $x_{\textrm{src}}$ is the account whose
balance is charged by $x$. We intend that $x$ is received by $C_k^i$ before the
start of round $i$ and hence $C_k^i$ can process it during round $i$. We say that
$C_k^i$ is \emph{responsible} for that transaction. The set of candidate 
transactions for which $C_k^i$ is responsible is denoted $P(C_k^i)$. 
We denote by
$P^i=\bigcup_k P(C_k^i)$ the set of candidate transactions processed by all
confirmation committees in round $i$. The result provided by $C_k^i$ is a
\emph{sequence} of transactions denoted $A_k^i$, with $A_k^i \subseteq P(C_k^i)$.

 A fundamental
aspect of the algorithm performed by $C_k^i$ is to obtain, for each transaction $ x $,
the balance of $ x_{\textrm{src}} $ related to $ B^{i-1} $ to check that $x$ complies with the non-negative balance rule. Since
proofs attached to transactions are related to $ B_{i-2} = B^{i-q-1} $, those proofs are for balances that are old.
 In fact, they might have been outdated by transactions accepted in the
last $q$ rounds, for which the corresponding block is not yet available. Hence,
each $C_k^i$ should also be aware of state changes induced by transactions
accepted by $C_k^{i-q}, \dots, C_k^{i-1}$, that is, of $A_k^{i-q}, \dots,
A_k^{i-1}$, 
\begin{sixpages}
respectively.
\end{sixpages}
\begin{longvers}
respectively (see Figure~\ref{fig:pipeline}).
\end{longvers}
These transactions are considered to update all account
balances involved in $ P(C_k^i) $ to match state related to $ B^{i-1} $. We call
\emph{time-updating} this process.
\begin{longvers}
In Figure~\ref{fig:pipeline}, we depicted the pipeline and put in evidence the inputs for a generic $ C^i_k $.
\end{longvers}

In our model, each $C_k^i$ performs the following algorithm (by a suitable consensus
protocol).
\begin{algorithm}[Confirmation]\label{algo:confirmation}
\end{algorithm}
\begin{enumerate}
	
\item It checks that each transaction in $P(C_k^i)$ fulfills syntactic rules and
proofs are not expired. It discards non-compliant transactions, resulting in $P'(C_k^i) \subseteq P(C_k^i)$

\item\label{step:conf:sort} It selects an arbitrary order $\overline T$ for $P'(C_k^i)$.

\item Let $\tilde T$ be the concatenation of $ A_k^{i-q}, \dots, A_k^{i-1}$.
For each account that appears as source in transactions of $\overline T$, consider the last balance
from $\tilde T$ and from the balances provided by the proofs of transactions in $\overline T$.

\item\label{step:conf:rulecheck} It executes $\overline T$ and checks that 
the resulting balance of each transaction fulfills the
non-negative balance rule. Transactions the do not fulfill this rule are
discarded. The resulting $A_k^i$ is derived from $\overline T$ where discarded
transactions are omitted.

\end{enumerate}

Transactions in $A_k^i$ should be considered \emph{confirmed} (or
\emph{accepted}) in the sense they have passed all checks to be inserted in $B^i$. 
To allow the confirmation committees of subsequent rounds to perform time-updating, 
$A_k^i$ is made available to $ C_k^{i+1},  \dots, C_k^{i+q} $ and also to other committees, as
explained in the following.

The sequence of accepted transactions for that round is denoted
$A^i=\bigcup_k A_k^i$, where $A^i$ is an arbitrary sequence that respects the order of
each $A_k^i$.

Even if $B^i$ is yet to be computed, storage nodes can receive from $C_k^i$
state changes that will be part of $B^i$. Transactions in $A_k^i$ are
selectively sent to the storage nodes that need it, to update the part of the
state they manage.

The actual computation of $B^i = B_{i+q-1}$ requires to compute its state root-hash, which means 
computing all the hashes of the conceptual Merkle tree $W$ related to the whole state space.
This is performed by $N_R$ committees, called
\emph{Root-hash Pipeline Committees} (\emph{RPCes}). Each RPC is associated to a part
of $W$ as shown in Figure~\ref{fig:WpartitionedByRPCs}, called the \emph{underlying tree} of the RPC. 
Each underlying tree is rooted to a node whose hash is named \emph{sub-root-hash}.
RPCes themselves form a tree
denoted by $W_\textrm{RPC}$, whose internal nodes have $2^k$ children, 
with the exception of the root that have \emph{at most} $2^k$ children. Dimensioning of $k$, and other 
parameters, is described in Section~\ref{sec:scalability}.
Upon state changes, each RPC is responsible to 
compute all hashes for its underlying tree. There are two kinds of
RPCes. \emph{Leaf} RPCes, that are the leaves of $W_\textrm{RPC}$, and \emph{inner} RPCes, 
that are all other RPCes. Each leaf RPC is responsible for the interval of contiguous addresses
that are leaves of its underlying tree. Since most addresses are unused,
leaf RPCes consider a pruned version of the underlying tree containing only the paths from used addresses
to its root. The underlying tree of each \emph{inner} RPC is a complete binary tree (see Section~\ref{sec:scalability}). 
RPCes are partitioned in \emph{levels} numbered from
$1$ to $h$. Level $1$ contains all leaf RPCes. Level $h$ contains only the
root of $W_\textrm{RPC}$. Each level is one stage of the pipeline.
Hence, the total number of stages of the pipeline (comprising CCes) is $q=h+1$.
Each RPC at level $i<h$ computes a sub-root-hash  that is fed as input to its parent in $W_\textrm{RPC}$. The root
of $W_\textrm{RPC}$ outputs and broadcasts the new block with the corresponding state root-hash.
Theorem~\ref{th:scalability} of Section~\ref{sec:scalability} states that it is
possible to dimension $W_\textrm{RPC}$ and the underlying trees of leaf and inner RPCes to ensure
scalability. 

We write $R^{i+1}_m$ to denote a generic leaf RPC that runs in the $(i+1)$-th round and is \emph{responsible} for the $ m $-th portion of the addresses space.
As mentioned above, leaf RPCes constitutes the second stage of our pipeline and have to receive as input $ A^i $, which is the output of the first stage. However, a leaf RPC does not need the whole $A^i$. 
More in detail, each leaf RPC $ R^{i+1}_m$ receives all and only the transactions in $ A^i $ that modify the balance of an address for which $ R^{i+1}_m$ is responsible. 
In our cryptocurrency model, a single confirmed transaction is sent to two leaf RPCes.
If $ x $ is  a transaction in $ A_k^{i} $,  $ C_k^{i} $ sends $ x $ to $ R^{i+1}_m $ only if the source or the destination of $ x $ is an address for which $ R^{i+1}_m $ is responsible. We denote with $ S_m(A^i) \subseteq A^{i}$ the set of transactions in $A^i$ that 
have one of its involved addresses in the $m$-th portion of the address space and have to be 
received by $ R^{i+1}_m $.

The task performed by $R^{i+1}_m$ is to calculate the sub-root-hash of its underlying tree related to $ B^i $. To do this, it needs the status of its underlying tree related to $ B^{i-1} $.
Since, proofs attached to transactions in $ S_m(A^i) $ are related to $ B_{i-2}=B^{i-q-1} $, they cannot be used alone
to compute all hashes of the underlying tree related to $ B^{i-1} $. In fact, they might have been outdated by transactions 
accepted in $ A^{i-q}, \dots, A^{i-1} $ for which the corresponding block is not yet available.
Hence, each $ R^{i+1}_m $ should also be aware of $ S_m(A^{i-q}),\dots, S_m(A^{i-1}) $.
Each $ R^{i+1}_m $ considers the proofs of these transactions according to their order 
to calculate all hashes of the pruned underlying tree related to $ B^{i-1} $. We call 
\emph{time-shifting} this process (a similar process is described in~\cite{bpp-bmdhtdvfss-19}).
To allow the leaf RPCes of subsequent rounds to perform time-shifting, 
each $C^i_k$ sends $ S_m(A^i_k) $ to $ R_m^{i+1}, \dots, R_m^{i+q+1} $, as well. 

\begin{sixpages}
We provide a formal correctness proof in~\cite{scaling-arXiv}.
\end{sixpages}

\begin{longvers}

\section{Correctness}\label{sec:correctness}
In this section, we formally prove the correctness of the architecture introduced in Section~\ref{sec:solution}.

The following lemma state the correctness of Algorithm~\ref{algo:confirmation}
when run on only one committee.

\begin{Mylemma}[Correctness of the confirmation algorithm]\label{lemma:correctness_cc}
	Algorithm~\ref{algo:confirmation} never returns a sequence that entails
	a violation of the non-negative balance rule.
\end{Mylemma}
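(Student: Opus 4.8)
The plan is to show that the sequence $A_k^i$ returned by Algorithm~\ref{algo:confirmation}, when its transactions are applied in order starting from the time-updated balances assembled in step~3, never drives any balance below zero; this is exactly what ``entails a violation of the non-negative balance rule'' negates. The first thing I would make precise is the semantics of the discarding in step~\ref{step:conf:rulecheck}: a transaction of $\overline T$ that fails the non-negative balance check is \emph{not} applied, so it leaves the running balances untouched, and $A_k^i$ is precisely the subsequence of $\overline T$ consisting of the transactions that \emph{were} applied, in the same relative order. Under this reading, replaying $A_k^i$ from the step~3 pre-state reproduces exactly the run performed inside step~\ref{step:conf:rulecheck}, so it suffices to prove that that run keeps every balance non-negative.

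The key structural observation is that every transaction in $P(C_k^i)$ --- hence in $P'(C_k^i)$, in $\overline T$, and in $A_k^i$ --- has its source account in the partition assigned to committee $C_k$, since the responsible committee is determined by $k=\hash(x_{\textrm{src}})\bmod N_c$ and this assignment of source accounts to committees is the same in every round involved. Consequently $C_k^i$, through the proofs attached to its transactions together with $\tilde T = A_k^{i-q}\cdots A_k^{i-1}$, is aware of \emph{all} debits ever applied to the accounts it is responsible for up to $B^{i-1}$; it may be unaware of some credits to those accounts (they originate in other committees of round $i$), but a credit only increases a balance, so ignoring it is conservative and can never turn a legitimate rejection into an illegitimate acceptance.

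I would then argue by induction on the length of the prefix of $\overline T$ processed in step~\ref{step:conf:rulecheck}, with the invariant ``after processing the prefix, the running balance of every account is $\ge 0$''. For the base case, the balances returned by valid proofs are non-negative because they are entries of a committed blockchain state, and the last-balance values taken from $\tilde T$ are obtained by applying previously accepted sequences, which by the same argument applied in earlier rounds (formally, an outer induction on the round index, or simply the inductive hypothesis on prior confirmations) carry non-negative balances; time-updating is just the composition of these, so the pre-state is non-negative. For the inductive step, a processed transaction is either discarded --- the state is unchanged and the invariant is inherited --- or applied, and step~\ref{step:conf:rulecheck} applies it only when the resulting source balance is $\ge 0$, while the destination balance only grows and every other account is untouched. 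Hence the invariant is maintained throughout, and in particular $A_k^i$ applied in order never produces a negative balance, which is the claim.

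The main obstacle I anticipate is nailing down the semantics of the discard step: if one instead read step~\ref{step:conf:rulecheck} as a single non-interactive pass over $\overline T$ that marks and then deletes all offending transactions, the lemma would fail in the cryptocurrency model, since deleting a transaction that credits an account can starve a later debit from that same account (which, by the structural observation, lives in the same committee). The online reading --- a failed transaction is never applied and therefore never perturbs the running state --- is what makes the inductive invariant go through, so part of the write-up is to state this reading explicitly. A minor, purely bookkeeping point is the non-negativity of the inputs (proofs and $\tilde T$); I would dispatch it with the remark that committed states contain only non-negative balances and that $\tilde T$ is built from outputs of this very algorithm in earlier rounds.
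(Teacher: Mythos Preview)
Your proposal is correct and rests on the same idea as the paper: step~\ref{step:conf:rulecheck} explicitly discards any transaction that would violate the non-negative balance rule, so the output cannot entail a violation. The paper, however, dispatches the lemma in a single sentence --- ``By construction of the result in Step~\ref{step:conf:rulecheck} of Algorithm~\ref{algo:confirmation}'' --- without the inductive unpacking, the discussion of the online-discard semantics, or the structural observation about source accounts that you supply; your version is a careful expansion of what the paper treats as self-evident.
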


\begin{proof}
By construction of the result in Step~\ref{step:conf:rulecheck} of Algorithm~\ref{algo:confirmation}.
\end{proof}

\begin{Mytheorem}[Correctness]
	Given a set of transactions $P^i$ processed, at round $i$, by confirmation
	committees $C_k^i$ producing accepted transactions sequences $A_k^i$, the
	following statements are true.
	\begin{enumerate}

		\item\label{stmt:nonnegative} In any sequence $A^i=\bigcup_k A_k^i$ such that
		$A^i$ respects the order of each $A_k^i$, the non-negative balance rule is
		respected.

		\item\label{stmt:roothash} The state root-hash of $B^i=B_{i+q-1}$ is the root-hash of the new 
		state after the application of $A^i$.

		\item\label{stmt:proofsFromStorage} Storage nodes knows the proofs of the
		addresses they store.
	\end{enumerate}
\end{Mytheorem}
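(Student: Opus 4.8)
The plan is to prove the three statements by complementary means: statement~\ref{stmt:nonnegative} from Lemma~\ref{lemma:correctness_cc} plus a monotonicity argument; statements~\ref{stmt:roothash} and~\ref{stmt:proofsFromStorage} jointly, by induction on the round number and using the structure of Merkle trees, the latter serving as an inductive invariant of the former. Throughout I assume that the consensus run by each committee faithfully produces the output prescribed by its algorithm; Byzantine behaviour is orthogonal and is addressed by committee randomisation.

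For statement~\ref{stmt:nonnegative}, the key structural fact is that in every round the committees partition the set of \emph{source} accounts: a transaction debiting an account $a$ is processed only by $C_k^i$ with $k=\hash(a)\bmod N_c$. Fix $a$ and a sequence $A^i$ that respects the order of every $A_k^i$. In any prefix $\pi$ of $A^i$ the transactions that debit $a$ form a prefix, in the order of $A_k^i$, of the debiting transactions of $A_k^i$, whereas transactions crediting $a$ only raise its balance. I would then show that the balance of $a$ that $C_k^i$ tracks while executing Step~\ref{step:conf:rulecheck} of Algorithm~\ref{algo:confirmation} is \emph{at most} the true balance of $a$ at the matching point of $\pi$: time-updating incorporates every debit of $a$ from rounds $i-q,\dots,i-1$ (all of which belong to committee $k$) and from round $i$, but may omit credits to $a$ produced by other committees, both earlier and in round $i$, so $C_k^i$'s view is a conservative under-estimate — and this suffices as long as the state of $B^{i-1}$ is non-negative, which holds by the same argument applied to earlier rounds, down to the genesis state. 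Since Lemma~\ref{lemma:correctness_cc} guarantees this tracked balance never drops below zero along $A_k^i$, the true balance stays non-negative along $\pi$. Conservativeness only means $C_k^i$ may reject some valid transactions, which is not a correctness concern.

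For statement~\ref{stmt:roothash} I would induct on $i$, with the combined hypothesis that for every $j<i$ the state root-hash in $B^j$ equals the root-hash of $W$ over the true state reached by applying $A^j$, and that storage nodes hold valid proofs for their addresses relative to those blocks (statement~\ref{stmt:proofsFromStorage}). The step has three parts. (i) The leaf RPCes partition the address space and each $R^{i+1}_m$ receives exactly $S_m(A^i)$, so collectively they observe every balance change caused by $A^i$, and applying all the $S_m(A^i)$ realises the same state transition as applying $A^i$. (ii) Time-shifting is correct: the proofs attached to the transactions of $S_m(A^{i-q}),\dots,S_m(A^i)$ are related to $B^{i-q-1}$ and, by the inductive hypothesis, certify the true balances there; replaying $S_m(A^{i-q}),\dots,S_m(A^{i-1})$ in order rebuilds the pruned underlying tree of portion $m$ for $B^{i-1}$ — using part (i), which ensures no transaction touching the portion is missed — and then $S_m(A^i)$ brings it to $B^i$. (iii) Because in a Merkle tree the hash of a node is a fixed function of its children's hashes, and the underlying trees of the RPCes tile $W$, feeding each sub-root-hash to the parent RPC in $W_\textrm{RPC}$ and letting each inner RPC hash its complete $2^k$-ary underlying tree recomputes, level by level, exactly the hashes a single machine holding the whole state of $B^i$ would obtain; so the root of $W_\textrm{RPC}$ emits the correct state root-hash and stores it in $B^i=B_{i+q-1}$. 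Statement~\ref{stmt:proofsFromStorage} for round $i$ then closes the induction: a storage node keeps the sub-forest of $W$ consisting of the root-to-leaf paths of its addresses, whose sibling nodes are precisely the hashes making up the proofs of those addresses; and it can update this sub-forest consistently against the freshly computed root-hash, since each $C_k^i$ forwards to it every accepted transaction altering one of its addresses and the hashes of the pruned siblings it does not store change only through data it likewise receives.

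I expect statement~\ref{stmt:roothash} to be the real difficulty. The delicate points are the correctness of time-shifting — that replaying $S_m(A^{i-q}),\dots,S_m(A^i)$ on the certified base balances genuinely reconstructs the pruned underlying tree of portion $m$ for $B^{i-1}$ and then $B^i$, leaving no address of the portion stale — and the Merkle-composition claim of part (iii); plus the bookkeeping that makes the mutual induction between statements~\ref{stmt:roothash} and~\ref{stmt:proofsFromStorage} well founded, which it is because statement~\ref{stmt:roothash} at round $i$ only uses proofs related to $B^{i-q-1}$ and hence only statement~\ref{stmt:proofsFromStorage} at the strictly earlier round $i-q-1$.
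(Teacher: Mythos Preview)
Your proposal is correct and rests on the same structural observations the paper uses---disjointness of source accounts across committees for Statement~\ref{stmt:nonnegative}, and the tiling of $W$ by RPC underlying trees plus forwarding of $S_m(A^i)$ for Statements~\ref{stmt:roothash} and~\ref{stmt:proofsFromStorage}---but you organise the argument more carefully in two respects. First, for Statement~\ref{stmt:nonnegative} you make explicit the conservativeness issue: time-updating in $C_k^i$ sees only $A_k^{i-q},\dots,A_k^{i-1}$ and thus may miss credits to $a$ issued by other committees, so the balance $C_k^i$ tracks is an under-estimate; the paper's proof simply invokes Lemma~\ref{lemma:correctness_cc} and disjointness without acknowledging this gap between the tracked balance and the true one. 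Second, you frame Statements~\ref{stmt:roothash} and~\ref{stmt:proofsFromStorage} as a mutual induction on the round and check its well-foundedness (round $i$ consumes only proofs related to $B^{i-q-1}$), whereas the paper argues each statement directly, relying on the observation that every hash a leaf RPC needs is either carried by an attached proof or is the root of a pruned subtree also carried by some proof, and that $U_n\subseteq U$ implies $W_n\subseteq W'$ so storage nodes can obtain all needed sub-root-hashes from RPCes. Your inductive packaging buys a cleaner account of why stale proofs are usable at all; the paper's direct argument is shorter but leaves the temporal dependency implicit.
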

\begin{proof}
 Concerning Statement~\ref{stmt:nonnegative}, observe that $A_k^i$ satisfy the
 non-negative balance rule (Lemma~\ref{lemma:correctness_cc}) and their order is preserved by
 hypothesis in $A^i$. Since for each $k$ the addresses charged in $A_k^i$ are
 not charged in any $A_j^i$ with $j\neq k$, the statement follows.
 
 Concerning Statement~\ref{stmt:roothash}, note that each leaf RPC $R^{i+1}_m$
 considers all the transactions that involve addresses for which $R^{i+1}_m$ is
 responsible that are present in sequences $S_m(A^{i-q}),\dots,S_m(A^{i-1})$, respecting their
 order. RPC $R^{i+1}_m$ can correctly compute its sub-root-hash to pass to its
 parent RPC. In fact, if an internal node of its underlying tree is involved in
 a transaction, $R_m^{i+1}$ receives the proofs attached with the transaction. If an
 internal node of its underlying tree is not involved in any transaction either it
 is pruned or it is a root of a pruned tree. In the first case, $R_m^{i+1}$ does
 not need it. In the latter case, $R_m^{i+1}$ receives its hash in one of the proofs
 available to it. Since, internal RPCes always receive, form their
 children, the hash values for all the leaves of their underlying tree, computing
 their sub-root-hash is trivial. Hence, the statement follows.
 
 Concerning Statement~\ref{stmt:proofsFromStorage}, note that RPCes compute the
 root-hash on the basis of a pruned version $W'$ of $W$, where leaves of $W$ kept in $W'$
 are all used addresses $U$. Each storage node $n$ stores a pruned version $W_n$
 of $W$, where leaves of $W_n$ are all addresses $U_n$ that $n$ intends to
 store. Since $U_n \subseteq U$, also  $W_n \subseteq W'$. Hence, all
 sub-root-hash of pruned subtrees in $W_n$ are known to one of the RPCes, which
 can communicate it to $n$. 
\end{proof}

\end{longvers}

\section{Scalability}\label{sec:scalability} 

In this section, we formally show the scalability of our approach. For real
systems, the workload is usually characterized probabilistically. For
simplicity, we reason assuming a 
deterministic workload. Indeed, similar arguments hold when reasoning with expected
values. We also assume balance changes in a round are
uniformly distributed across the whole address space.
We start by introducing some assumptions and notation.

We denote by $f$ the
frequency of transactions of the blockchain workload. We denote by $\varDelta$ the duration
of a round. We denote by $m= 2 f \varDelta$ the number of addresses whose
balance changes in each round, assuming transactions involve distinct addresses. 
\begin{longvers}
 Let $\tilde W$ a pruned version of $W$ where only
$m$ leaves are kept, i.e., those related to the addresses that change balance in
one round. We note that there is a level $l$ of $\tilde W$, above which $\tilde
W$ is a complete binary tree. As $m$ grows, the pruned part get smaller and $l$
get closer to the leaves.
\end{longvers}

We denote by $j$ the maximum number of hashes that an RPC can compute in a round. Note that
$j$ is constant, since it depends on the CPU power of committees members. We
denote by $e$ the maximum number of balance changes that a leaf RPC $R$ can
process in one round. Clearly $e$ depends on $j$ and
on how changes are distributed in the address space for which $R$ is
responsible, since this determines the number of nodes of the pruned underlying
tree that $R$ has to deal with. However, by the uniform distribution assumption,
$e$ is the same for all leaf RPCes. 
\begin{longvers}
For simplicity, we assume $j$ to be large enough so that the root of the
underlying tree of leaf RPCes is above level $l$. Hence, the underlying tree $U$
of an inner RPC $R$ is a complete binary tree, as stated in
Section~\ref{sec:architecture}. Let $k$ be the number of the levels of $U$. The
nodes of $U$ are $2^k-1$. The children RPCes of $R$ are $2^{k-1}$. For each round,
the inner RPC has to compute one hash for each node of $U$. 
\end{longvers}
\begin{sixpages}
For simplicity, we assume $j$ to be large enough so that underlying trees of
inner RPCes are complete, as stated in
Section~\ref{sec:architecture}.  
\end{sixpages}
The maximum
number of nodes in the underlying tree of an inner RPC is $\hat{j}=2^{\hat k}-1$, where
$\hat k$ is the largest possible integer such that $\hat{j} \leq j$, or equivalently
$\hat k = \floor*{\log_2 (j+1)} $.
The maximum number of levels of the underlying tree of an RPC is also $\hat k$.
We denote by $S(N,f)$ a blockchain system $S$, with the architecture described
in Section~\ref{sec:solution}, with $N$ nodes and with a workload at frequency
$f$.

\begin{Mylemma}\label{lemma:num_rpc}
  Consider a blockchain system $S(N,f)$, of $N$ nodes with workload $f$. Let $e$
  be the maximum number of balance changes that a leaf RPC can handle per round,
  and $\hat j$ be the maximum number of hashes an inner RPC can compute per
  round. If $S$ is well dimensioned, the number of leaf RPCes is at least $2^{\ceil*{\log_2
  (m/e)}}$ and the number of inner RPCes is at least $ \ceil*{\frac{2^{\ceil*{\log_2
  (m/e)}}-1}{\hat{j}}}$, where $m=2f\varDelta$ and $\varDelta$ is the round duration.
\end{Mylemma}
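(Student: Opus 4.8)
The plan is to count, from the bottom of the pipeline upward, how many committees are forced by the dimensioning constraints. First I would handle the leaf RPCes. In each round, $m = 2f\varDelta$ addresses change balance, and by the uniform-distribution assumption these are spread evenly across the leaf RPCes. A single leaf RPC can absorb at most $e$ balance changes per round, so we need at least $\lceil m/e\rceil$ leaf RPCes. However, the leaf RPCes are the leaves of $W_\textrm{RPC}$, and — more importantly — each leaf RPC is responsible for a contiguous \emph{dyadic} interval of the address space (it owns a subtree of the conceptual Merkle tree $W$, which is a complete binary tree on a power-of-two address space). Hence the number of leaf RPCes must itself be a power of two, and the smallest power of two that is at least $\lceil m/e\rceil$ is $2^{\lceil \log_2(m/e)\rceil}$. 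This gives the first bound.

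Next I would count the inner RPCes. By the earlier discussion (and the standing assumption that $j$ is large enough that the underlying trees of inner RPCes are complete binary trees), each inner RPC computes one hash per node of its underlying tree per round, and the number of such nodes is at most $\hat j = 2^{\hat k}-1$. The inner RPCes must collectively compute all the hashes of $W_\textrm{RPC}$ above the leaf level — equivalently, the internal nodes of a complete binary tree whose leaves are the $L := 2^{\lceil \log_2(m/e)\rceil}$ sub-root-hashes output by the leaf RPCes. A complete binary tree with $L$ leaves has $L-1$ internal nodes, each of which corresponds to exactly one hash an inner RPC must produce in steady state. Since each inner RPC handles at most $\hat j$ such hashes, at least $\lceil (L-1)/\hat j\rceil = \lceil (2^{\lceil \log_2(m/e)\rceil}-1)/\hat j\rceil$ inner RPCes are required, which is the second bound.

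The remaining point to be careful about is the phrase ``if $S$ is well dimensioned'': I would make explicit that a well-dimensioned system is exactly one in which no committee is overloaded — each leaf RPC handles $\le e$ changes, each inner RPC computes $\le \hat j$ hashes — and in which the RPC tree $W_\textrm{RPC}$ indeed covers all of $W$ so that the block's state root-hash is produced. Under that reading the two lower bounds are immediate from the counting above; the lemma only claims ``at least,'' so I need not argue these counts are achievable (that is deferred to the scalability theorem).

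The main obstacle is a modeling subtlety rather than a hard computation: justifying why the leaf-RPC count must be rounded up to a power of two. This needs the fact that the leaf RPCes partition the address space into the dyadic blocks induced by some fixed level of the complete binary tree $W$ (equivalently, that $W_\textrm{RPC}$ is a $2^k$-ary tree sitting on top of a binary $W$, so cut levels are powers of two), together with the uniform-distribution assumption which guarantees that making the blocks equal-sized does not overload any single leaf RPC. Once that structural fact is pinned down, everything else is elementary arithmetic on complete binary trees.
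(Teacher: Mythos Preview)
Your proposal is correct and follows essentially the same approach as the paper: bound the leaf RPCes by $\lceil m/e\rceil$, round up to a power of two because they sit as leaves of a complete binary structure, then count the $2^g-1$ internal nodes of the resulting tree and divide by $\hat j$. The only difference is cosmetic: the paper justifies the power-of-two constraint by noting that leaf RPCes must be the leaves of the complete binary tree underlying the inner RPCes (a top-down view), whereas you justify it via the dyadic partition of $W$ (a bottom-up view), and the paper additionally narrates the ``doubling'' dynamics by which the RPC tree grows---material that foreshadows the scalability theorem but is not needed for the lower bounds the lemma actually states.
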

\begin{sixpages}
A proof of Lemma~\ref{lemma:num_rpc} is provided in~\cite{scaling-arXiv}.
\end{sixpages}
\begin{longvers}
\begin{proof}
Leaf RPCes should at least be $\ceil{m/e}$, however, since they have to be the
leaves of a complete binary tree, underlying to inner RPCes, their number have
to be a power of 2. Hence, we they have to be $2^g$ with $g=\ceil{\log_2
(m/e)}$. Consider the union of the underlying graphs of all inner RPCes $W_I$, its
number of levels is $g$, as well.

As $m$ increases, the resource usage of each leaf RPC
increases. When resources of leaf RPCes are fully used, their number is doubled.
Note that, doublings occur when $m/e$ is a power of 2. \emph{Right after a
doubling}, their resources are half used. Increasing $m$, their resource usage
goes from half of its processing capacity to maximum capacity, which occurs \emph{right
before a doubling}.

When leaf RPCes (at level 1 of $W_{RPC}$) are doubled, also inner RPCes at level 2 of $W_{RPC}$ are
doubled. All inner RPCes have their underlying tree with the maximum number of levels, except for the root of
$W_{RPC}$ (see Figure~\ref{fig:WpartitionedByRPCs}), for which its number of levels is
increased by one at each doubling. This occurs until the levels of the underlying tree of the 
root of $W_{RPC}$ reaches $\hat k$.
After that, $g$ increases by one and a new root with
only one node as underlying tree is added at the top. 
Since, leaf RPCes are $2^g$, the nodes of $W_I$ are $2^g-1$. Hence, the number of inner RPCes is given by $
\ceil*{{(2^g-1)}/{\hat{j}}}$. Note that both numerator and denominator represent
the size of a complete binary tree, with $g$ and $\hat k$ levels, respectively. 
The integer part of the result of the
division is the number of inner RPCes with full-sized underlying tree.
The
reminder is the size of the underlying tree of the root of $W_{RPC}$, which is not 
full-sized, in general. 
\end{proof}
\end{longvers}
The following theorem states the scalability of our approach when nodes and workload are proportionally
increased.

\begin{Mytheorem}[Scalability]\label{th:scalability}
  There exists a well-provisioned blockchain system $S(N,f)$, with $N$
  nodes and workload frequency $f$, such that, for all
  $\alpha>1$ such that $\alpha N$ is integer, it is possible to provide
  a well-provisioned blockchain system $\bar S(\alpha N, \alpha f)$.
  
  The above statement holds under the assumption that the balance changes induced by the
  workload are uniformly distributed across the address space.
\end{Mytheorem}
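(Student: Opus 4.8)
The plan is to turn the statement into a counting argument: show that the number of nodes needed to well-provision the architecture is (essentially) a linear function of the workload $f$, so that a proportional increase of $N$ automatically keeps the system well-provisioned. First I would fix, once and for all and independently of $N$ and $f$, the per-committee constants guaranteed by the constant-resources assumption: the committee size $s$, the round duration $\varDelta$, the per-RPC hash budget $j$ (hence $\hat k=\floor*{\log_2(j+1)}$ and $\hat j=2^{\hat k}-1$), the per-leaf-RPC change budget $e$, and the per-CC processing budget. For workload $f$ the architecture must then handle $f\varDelta$ candidate transactions and $m=2f\varDelta$ changed addresses per round.

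Next I would dimension each family of committees in terms of $f$. For the RPCes this is exactly Lemma~\ref{lemma:num_rpc}: $N_L=2^{\ceil*{\log_2(m/e)}}\le 4m/e$ leaf RPCes and $N_I=\ceil*{(N_L-1)/\hat j}\le N_L$ inner RPCes, so $N_R=N_L+N_I=\Theta(f)$; the uniform-distribution hypothesis is precisely what makes $e$ the same for every leaf RPC, so a single doubling of the leaf-RPC count suffices whenever the per-leaf load saturates. For the CCes I would argue in parallel: since a transaction is routed to $C_k$ by $\hash(x_{\textrm{src}})\bmod N_c$ and source accounts are uniform, each CC is responsible for about $f\varDelta/N_c$ transactions per round, and choosing $N_c$ so that this batch plus the time-updating input $A_k^{i-q},\dots,A_k^{i-1}$ fits the per-CC budget gives $N_c=\Theta(f)$ — provided the time-updating is done incrementally, i.e.\ $C_k^i$ inherits the running per-account balances from $C_k^{i-1}$ and folds in only the new sequence $A_k^{i-1}$ rather than rescanning all $q$ of them. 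The analogous check for the time-shifting input $S_m(A^{i-q}),\dots,S_m(A^{i-1})$ of a leaf RPC is settled the same way (inherit last round's pruned underlying tree, apply $S_m(A^{i-1})$), and uniformity again spreads the $\approx m/N_L$ endpoints per round evenly. Adding $\Theta(|U|)$ storage nodes, each holding a constant-size pruned subtree of $W$ over its slice of the used address space $U$, I get $N_{\textrm{needed}}(f)=s(N_c+N_R)+\Theta(|U|)\le\beta f+\gamma$ for constants $\beta,\gamma$. I would then take the base system to be $S(N,f)$ with $N=\ceil*{\beta f+\gamma}$; for any admissible $\alpha>1$, re-running the dimensioning for workload $\alpha f$ asks for at most $\beta(\alpha f)+\gamma\le\alpha(\beta f+\gamma)\le\alpha N$ nodes, and each committee count scales by $\alpha$ (the Lemma~\ref{lemma:num_rpc} formulas are monotone in $m$ and, up to the power-of-two rounding which only helps, linear), so allocating the $\alpha N$ nodes to committees as prescribed yields a well-provisioned $\bar S(\alpha N,\alpha f)$.

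I expect the main obstacle to be the growth of the pipeline depth. As $f$ increases, $W_{\textrm{RPC}}$ acquires new levels, so $q=h+1$ grows — slowly, like $\Theta(\log f)$ — and with it both the number of past outputs a committee must retain for time-updating/time-shifting and the confirmation latency $q\varDelta$. The transient computation caused by $q$ is killed by the incremental processing above, so it never enters the node count; what remains is that each committee keeps a recent-history buffer whose size scales like $q$, which is in tension with strictly constant per-node state when $\alpha$ is allowed to be arbitrarily large. Discharging this cleanly requires showing that this buffer can itself be re-sharded across the $\Theta(f)$ committees of each stage while preserving the source-hash / address-interval routing, so that its per-node share stays $O(1)$; that re-sharding argument — together with bounding the deviation of the real per-committee loads from their uniform-distribution expectations, for which I would invoke the paper's stated convention that the analysis carries over to expected values with a constant-factor margin — is, I think, the real content behind Theorem~\ref{th:scalability}.
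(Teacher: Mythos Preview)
Your core strategy---fix the per-committee constants, use Lemma~\ref{lemma:num_rpc} to bound $N_R$ linearly in $m=2f\varDelta$, argue the same for $N_c$, and conclude that $N_{\textrm{needed}}(f)\le\beta f+\gamma$---is exactly the paper's approach. The paper packages the arithmetic slightly differently: it fixes the base system $S(N,f)$ to sit \emph{right after a doubling} (so $m/e$ is a power of two and the ceilings vanish), treats $\alpha=2^t$ first to get $\bar N_R=2\alpha m/e+\ceil*{2\alpha m/(e\hat j)}\le\alpha N_R$, and then observes that between consecutive doublings the same $\bar N_R$ still suffices because leaf RPCes are only half-loaded just after a doubling. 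Your $\beta f+\gamma$ bound and the inequality $\beta(\alpha f)+\gamma\le\alpha(\beta f+\gamma)$ accomplish the same thing with less bookkeeping; either version is fine.

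Two remarks on where you diverge. First, the $\Theta(|U|)$ storage term should not enter the node count for this theorem: the paper treats storage separately (``storage can be realized so that the number of storage nodes that store a certain account is bounded'') and $|U|$ is not tied to $f$, so folding it into $\beta f+\gamma$ is unjustified and unnecessary. Second, your worry about the pipeline depth $q=h+1=\Theta(\log f)$ is legitimate---the paper dismisses time-updating with ``a constant number of previous rounds,'' which is imprecise since $q$ does grow with $\alpha$---but your incremental-processing fix (inherit running balances from $C_k^{i-1}$, fold in only $A_k^{i-1}$) already makes the per-round \emph{work} constant, and that is all the theorem needs. The remaining per-committee \emph{buffer} of size $O(q)=O(\log f)$ is a genuine sublinear storage growth that neither your re-sharding sketch nor the paper actually eliminates; the honest reading is that the paper silently absorbs this logarithmic factor, and your proposal would be cleaner if it did the same rather than reaching for a re-sharding argument that is not really fleshed out.
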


\begin{proof}
\begin{sixpages}
    (sketch) To prove the statement, we choose $S$ (with load $f$) so that the load of the leaf RPCes is minimum. By Lemma~\ref{lemma:num_rpc}, we 
    derive the number of CCes and RCSes needed
    in $\bar S$ (with load $\alpha f$). Then, we show that the increment of committees is compatible with the
    increment of the nodes from $S$ to $\bar S$. We also observe that the
    number of messages sent to and received by each committee is bounded. See~\cite{scaling-arXiv} for a detailed proof.
\end{sixpages}
\begin{longvers}
  We choose as $S(N,f)$ a well-provisioned system, right after a
  doubling, i.e., with minimum resource usage for the leaf RPCes. 
  In our setting, committees of $S$ and $\bar S$ have the same processing
  capabilities. We want to prove that $\alpha N$ nodes in $\bar S$
  are enough for the number of committees needed to process a workload $\alpha f$. We
  first derive the needed number of CCes in $\bar S$ compared to that of $S$ and
  then we do the same for RPCes. Then, we show that they are compatible with the
  increment of the nodes.

  A workload at frequency $f$, generates $f \varDelta$ transactions per round.
  Let $ N_C $ be the number of CCes in $S$. Since $S$ is well-provisioned, each
  CC is able to process $f \varDelta /N_C$ transactions per round. The load of
  $\bar S$ is $\alpha f$, hence, with $\alpha N_C$ CCes, we obtain in $\bar S$ the same
  resource usage of each CC as in $S$. 
  
  Note that, the fact that each $CC$ have to re-process
  transactions accepted in a constant number of previous rounds does not impact
  this reasoning. This is true also for the following argument about RPCes.
  
  We use symbols, $m$, $e$, $\hat j$, and $g$ with the same meaning as before.
  To simplify the proof, in accordance with Lemma~\ref{lemma:num_rpc}, we choose
  to provision $S$ and $\bar S$ with $\ceil*{\frac{2^{\ceil*{\log_2
  (m/e)}}}{\hat{j}}}$ inner RPCes, possibly leaving one of the inner RPCes
  without workload.  Note that, $S$ is well-provisioned and with minimum resource usage of leaf RPC, i.e.
  with $f$ right after a doubling. Hence, in our case, $m/e$ is a power of two, $g=\log_2
  (m/e) +1$, and by Lemma~\ref{lemma:num_rpc} and the above choice for the number of inner RPCes,
  the total number of RPCes in $S$ is $N_R =2^g + \ceil*{{2^g}/{\hat{j}}}= 2m/e +
  \ceil*{2m/(e\hat j)} $. We now consider the total number $\bar N_R$ of RPCes
  needed by $\bar S(\alpha N, \alpha f)$. For $\alpha=2^t$ with $t$ positive integer, it
  should be $\bar N_R  = 2 \alpha m/e+ \ceil*{2 \alpha m/(e\hat j)}$. Since, 
  $\ceil*{2 \alpha m/(e\hat j)}\leq \alpha \ceil*{2 m/(e\hat j)}$, we have that $\bar N_R
  \leq \alpha N_R$. Hence, $\alpha N$ nodes are enough for $\bar S$ to be
  well-provisioned. Our assumptions imply that $\bar S(\alpha N, \alpha f)$ is
  again right after a doubling and with only half of the resources of leaf RPCes
  used. Hence, with the same $\bar N_R$, $\bar S$ is well provisioned till the
  next doubling, which covers all the values of $\alpha$ such that, $2^t<\alpha<2^{t+1}$.
  
  Hence, the needed increment of the number of CCes and of the number of
  RPCs from $S$ to $\bar S$ is at most by a factor of $\alpha$, and $\bar S$ has $\alpha N$
  nodes. This proves that $\bar S$ is well-provisioned regarding processing aspects.

  Additionally, we note that the messages sent and received by each committee in
  $\bar S$  in each round is no more that the double of the number of messages
  sent and received in $S$. Further, we note that the storage can be realized so that
  the number of storage nodes that store a certain account is bounded.
  
  The above observations complete the proof of the statement.
\end{longvers}
\end{proof}

\section{Discussion and Future Works}\label{sec:discussion}

In this section, we discuss the effectiveness or our approach and certain
aspects that are not analyzed in the rest of the paper. In particular, it is
important to understand if the proposed approach is a better solution to the
blockchain trilemma than the previously known ones. This means understanding if
scaling requires to limit security and/or decentralization, since the
scalability of our appraoch has been formally stated in
Section~\ref{sec:scalability}.

\paragraph{Decentralization}
Concerning decentralization, note that in our system all nodes
cooperate in the creation of a new block.  Further, even if the committees do
not have all the same role, we can assign nodes randomly to each committee, possibly changing them periodically (like,
for example, in~\cite{chen2019algorand}). In this way, the role of the nodes is statistically homogeneous. 
Due to these considerations, we
think that scaling in our approach does not affect decentralization.

\paragraph{Security}
Considering security, many other research works and practical systems relay on
the security of a consensus algorithm run by a restricted set of nodes forming a
committee. In this setting, many attacks require the attacker to control the
majority of the committee members. However, when committees are randomly selected, this becomes harder 
as the number of nodes increases. In conjunction with a proof-of-stake approach (for example)
this protects against Sybil attacks.
If consensus algorithm is robust enough, the presence of several committees has
a negligible impact on security.
In this sense, security of our approach increases when scaling to a higher number of nodes.
Clearly, security is about many other aspects, but the difficulty to subvert the
consensus is usually considered in the context of discussions about the
blockchain trilemma.

It is worth to note that for the correctness of our approach nodes need to  keep only a constant amount 
of blocks. However, for security reasons, nodes can keep more blocks (or other related information).
For example, in Vault~\cite{leung2019vault} the join of a new node is securely performed without downloading 
the whole blockchain. Analogous approaches can be adopted in our context.

A security analysis with respect to a formally stated threat model is leaved as future work.

\paragraph{Committee members selection} Our approach is applicable independently from the
way members of each committee are selected. 
Their selection can be done using a public shared source of
randomness or using verifiable random functions, as in~\cite{chen2019algorand}.
However, there is a caveat regarding this in our approach. Since intermediate
results of the pipeline are passed to committees that need them in the next 
rounds, if members of committees change, these have to be decided and published
before data is sent to them. Note that, resorting to broadcast is not possible
since this would impair scalability.

\paragraph{Inter-committee communication} In Section~\ref{sec:solution}, we
often relayed on the possibility for a committee to communicate data to other
committees that need them in the next rounds. Inter-committee communications
should be part of the consensus protocol, in the sense that each receiver should
accept a message $m$ from a sender committee $S$ only after having checked that 
$m$ was sent by a number of members above a certain threshold.

\paragraph{Network communications} In our architecture, most messages are sent from nodes to
a bounded number of other nodes (members of a certain committee or storage nodes storing a certain account).
\begin{sixpages}
We think that a multicast approach can be used to fulfill our needs.
The formal statement of the requirements of this underlying multicast layer as well as its design 
is leaved as future work. Some further considerations and references to related networking literature 
can be found in~\cite{scaling-arXiv}.
\end{sixpages} 
\begin{longvers}
We note that, using unicast communications, the number of actual messages turns out to be
quadratic in the size of the committees. While practically this might be a
problem, from a theoretical point of view this is not the case, since the size
of committees is constant. However, a critical aspect is that unicast
communications require the destination to be known, which is not easy to obtain
in a scalable manner when the committees change regularly.

We think that a multicast approach can be used to fulfill our needs.
The formal statement of the requirements of this underlying multicast layer as well as its design 
is leaved as future work. However, it may be worth to mention that the use of
standard multicast techniques may not completely suite our needs. In particular,  
the following aspects should be considered when adopting multicast for inter-committee communications or for submission of new transactions to the proper confirmation committee.
\begin{enumerate}
   \item The members of a multicast group might change rapidly,
   depending on the round duration.
   \item The preparation of the multicast groups can be performed in advance with
   respect to when they are needed, even if this requires to anticipate the disclosure of 
   committee members.
   \item The multicast groups are needed for only one round and then discarded.
   This might simplify the development of a specific technique for this
   application.
\end{enumerate}
Regarding the use of multicast for messages destined to storage nodes, we note that 
the number of needed multicast channels might very large: one 
for each used address.

Many research works about scalable multicast routing are
available in literature (see, for
example,~\cite{tian1998forwarding,gronvall2002scalable,wong2000analysis,tapolcai2014optimal,balay2012scalable,broder2004network}). Fast-join multicast routing was studied in~\cite{cho2003fjm}.
\end{longvers}

\paragraph{Synchronization and committee decision failing} 
\begin{sixpages}
We assumed a sort of global synchronization, which is impractical. 
Further, in practice, a committee may occasionally fail and not produce its output.
We leave as a future work 
the investigation of these aspects.
\end{sixpages}
\begin{longvers}
In our description,
we essentially assumed a sort of global synchronization.
In practice,
synchronization spread across a large number of nodes (although partitioned in
committees) might be difficult to achieve. The problem of modifying our approach
to relax synchronization requirements is left as a future work. A related
problem is the failure of a committee to reach an agreement, which is unlikely
to for a direct attack, but may occur under large network faults. Again, we left
the investigation of these aspect as a future work.
\end{longvers}

\section{Conclusions} 
\label{sec:conclusions}

We showed a novel blockchain design that distributes the burden to create the next block on 
many parallel executing committees (involving ``almost'' all nodes) 
and that avoids broadcast in all cases that are critical for scalability.
\begin{sixpages}
We formally stated the scalability of our approach and discussed how this does not impair decentralization and security. 
\end{sixpages}
\begin{longvers}
We provided formal proof of the scalability of our approach and of its correctness.
We also discussed how scaling does not impair decentralization and security. 
\end{longvers} 
Hence, our architecture can be regarded as a 
solution to the blockchain scalability trilemma, in the studied setting.
\begin{longvers}

Some future works were already mentioned in Section~\ref{sec:discussion}.
In particular, from a theoretical point of view, a formal security proof is needed, as well as a further investigation of synchronization and behavior under committee consensus failure.
Further, since we assumed to have a scalable multicast protocol, it should be interesting 
to understand how to realize this protocol and how it is possible to rely on current technology to realize an efficient one.
From a practical point of view, an experimentation or simulation with realistic parameters 
would be desirable. 
\end{longvers}

\begin{longvers}
\section{Acknowledgments}
We are extremely grateful to Ciro Oliviero for his important contribution in the beginning of this research.
\end{longvers}

\end{document}